\newcommand{\bra}[1]{\langle #1|}
\newcommand{\ket}[1]{|#1\rangle}
\newcommand{\braket}[2]{\langle #1|#2\rangle}
\newcommand{\ketbra}[2]{|#1\rangle\!\langle#2|}
\newcommand{\expt}[1]{\langle#1\rangle}
\newcommand{\id}{\mathbbm{1}}
\newtheorem*{thmnonumb}{Theorem}
\newtheorem{defin}{Definition}
\let\oldmarginpar\marginpar
\renewcommand\marginpar[1]{\-\oldmarginpar[\raggedleft\marginparsize #1]%
{\raggedright\marginparsize #1}}
\def \spek [#1]{\includegraphics[width=0.5cm]{./images/spekkens/spek#1.eps}}
\newcommand{\inlineheading}[1]{\textbf{\textit{#1---}}}
\newcommand{\inlinesubheading}[1]{\textit{#1---}}
\begin{document} 

\setlength{\tabcolsep}{1ex}

\title{A general framework for phase and interference}

\author{Andrew~J.~P.~Garner$^{1}\footnote{a.garner1@physics.ox.ac.uk}$, Oscar~C.~O.~Dahlsten$^{12\footnote{oscar.dahlsten@physics.ox.ac.uk}}$, Yoshifumi~Nakata$^{3}\footnote{Now at: Institut f\"ur Theoretische Physik, Leibniz Universit\"at Hannover, Appelstr. 2, 30167 Hannover, Germany}$, Mio~Murao$^{34}$ and Vlatko~Vedral$^{12}$}
\affiliation{
 $^{1}$\mbox{Atomic\ and\ Laser Physics, Clarendon\ Laboratory,} \\
 \mbox{University\ of\ Oxford, Parks\ Road, Oxford, OX13PU, United\ Kingdom}\\
 $^{2}$\mbox{Center\ for\ Quantum\ Technologies, National\ University\ of\ Singapore, Republic\ of\ Singapore} \\
 $^{3}$\mbox{Department of Physics, Graduate\ School\ of\ Science, University\ of\ Tokyo, Tokyo\ 113-0033, Japan}\\
 $^{4}$\mbox{Institute\ for\ Nano\ Quantum\ Information\ Electronics, University\ of\ Tokyo, Tokyo\ 153-8505, Japan}
}

\date{ \today}


\begin{abstract} 
Phase plays a crucial role in many quantum effects including interference. Phase is normally defined in terms of complex  numbers that appear when representing quantum states as complex vectors. Here we give an operational definition whereby phase is instead defined in terms of measurement statistics. Our definition is phrased in terms of the operational framework known as generalised probabilistic theories, or the convex framework. The definition makes it possible to ask whether other theories in this framework can also have phase. We apply our definition to investigate phase and interference in several example theories: classical probability theory, a version of Spekken's toy model, quantum theory, and box-world. We find that phase is ubiquitous; any non-classical theory can be said to have non-trivial phase dynamics. 
\end{abstract}

\maketitle 

\section{Introduction}
In quantum theory different {\em phases} $\phi_i$ can be associated with different branches of a superposition by polar-decomposing the complex amplitudes of each branch: $\ket{\psi}=\sum_j r_j e^{i\phi_j}\ket{j}$. Whilst they are not observable in the basis $\{\ket{j}\}$ in question they can have significant consequences for measurements in other bases. This plays a fundamental role in many of the most strikingly non-classical behaviours of quantum theory, including interference. Phases moreover play a crucial role in decoherence~\cite{Schlosshauer07} and quantum thermalisation~\cite{LindenPSW09,NakataTM12}. There is evidence that they lie at the heart of the apparent exponential speed-up of quantum computers, in that many key quantum algorithms can be phrased as performing phase-estimation~\cite{CleveEMM98}. 
It has also been argued that instantaneous quantum polynomial-time (IQP) circuits, 
which change only the phases of the separable input state with respect to the computational basis,
are likely to have stronger computational power than classical computers~\cite{ShepherdB09, BremnerJS10}.
Accordingly understanding the phenomenon of phase should be a key aim of research efforts in quantum foundations. 

The definition of phase in terms of the exponent above is not operational in its nature, as it is not defined in terms of measurement statistics but in terms of the theoretical model. This means that it is a priori not well-defined to talk about phase in experiments involving systems not governed by quantum theory. Yet there is currently great interest in parts of the quantum information/quantum foundations community in investigating theories in a wider framework than quantum theory. A key motivation for investigating such theories is to understand quantum theory better by investigating what happens if some restriction from  quantum theory is relaxed. For example there is much interest in investigating whether any fundamental physical principle would be violated if non-locality were to exceed Tsirelson's bound, which quantum theory respects~\cite{Tsirelson93,Cirelson80,PopescuR94}. It is standard to use a framework called general probabilistic theories (GPTs) for these studies (the framework is also called the convex framework) see e.g.\ ~\cite{Hardy01,Barrett07}. With this paper we aim to lay the foundations for studying phase and its impact e.g.\ on computing in GPTs. 

We define phase operationally. 
As phase is inherently a relative property of two states, we focus on phase {\em transforms}, which would in the quantum case be transforms that change the $\phi_j$ but not the $r_j$.
This means in the quantum case that the phase transforms leave the statistics of measurement in the basis $\{\ket{j}\}$ invariant. 
More generally, in the GPT framework we define the phase transforms of a measurement as those which leave the measurement statistics invariant.
As in quantum theory, one can label the phases of a {\em state} by associating it with the phase transform that creates this state from some arbitrary reference state with the same values of $r_j$ (in quantum theory, this reference state is usually one where all $\phi_j=0$).

We apply this definition to the quantum case where the standard notion of phase is recovered for maximal measurements, meaning measurements which distinguish a number of states equal to the Hilbert space dimension. We also investigate phase in (i) classical probability theory, (ii) Spekkens' toy theory---a hidden variable model which emulates many quantum effects~\cite{Spekkens07}, and (iii) box-world, a theory which allows for non-locality beyond Tsirelson's bound~\cite{Barrett07}. We find that a GPT theory has non-trivial phase transforms with respect to maximal measurements if and only if it is non-classical. We discuss the connection between phase and interference. 

A generalised definition of the phase group was proposed and studied in two papers by Coecke, Duncan, Edwards and Spekkens, formulated in a different framework based on category theory, employing diagrammatical calculus~\cite{CoeckeD11,CoeckeES11}. Our definition of the phase group in the GPT framework turns out to coincide with that other definition in both quantum theory and Spekken's toy theory (as we will show later in the paper). On the other hand we find that the phase group can in some theories be non-abelian whereas Coecke et al.'s definition demands that it be abelian. This apparent contradiction may be simply resolved if these cases are not covered by the other framework (it is to our knowledge not known whether the theory containing gbits can be formulated in the other framework). It may alternatively be that the definitions are simply different in general; this question deserves further study and is likely to illuminate the relation between the two frameworks more generally.

A description of interference in the GPT framework has been presented by Ududec, Barnum and Emerson\cite{UdudecBE09,Ududec12}, who focus their attention on triple-slit experiments, building on the work of Sorkin and others on hierarchical families of interference\cite{Sorkin94,BarnettDR07}.
Sorkin shows that in quantum theory the set of output states (i.e.~interference patterns) that have passed through all three slits can be fully described by considering combinations of states that have passed through just two of the slits, but Ududec shows for general theories this is not necessarily the case.
These works taken together suggest studying phase-related effects in more general scenarios is both possible and fruitful, and so
here we propose a systematic approach giving a general operational definition of phase.
Our discussion of interference focuses on the relationship between phase and the {\em total} set of interference dynamics. 
One can perform a decomposing analysis similar to Sorkin or Ududec by making use of the framework we present here.  Within the total group of phase transformations, one must identify subgroups that can be associated with a smaller subset of slits, using reasoning such as locality arguments\cite{DahlstenGV12}.

We proceed as follows. Section~\ref{Sec:tech_intro} gives a technical introduction to relevant aspects of generalised probabilistic theories (GPTs). In section~\ref{Sec:PG} we propose a definition of reversible phase dynamics: ``phase groups'' in GPTs and apply it to Spekken's model and gbits. In section IV we extend this to include irreversible phase dynamics, proving that phase dynamics are non-trivial iff the theory is {\em non-classical} (using the standard GPT definition of non-classical). Section~\ref{Sec:Int} discusses the relation between phase and interference in GPTs.

\section{Technical Introduction} We begin by describing GPTs, with a focus on the three examples of a classical bit, a qubit and a gbit. We then describe toy theory of Spekkens, which is a priori different, and show how it can be treated in the GPT framework. 

\label{Sec:tech_intro}
\subsection{Generalized probabilistic theories} \label{Sec:GPT}
We will find it convenient for our purposes to use the representation of GPTs from Ref.~\cite{Barrett07}. 
A GPT is defined by a tuple of {\it a state space}, {\it effects} and {\it transformations}.
A {\it state} $\vec{s}$ is completely defined by a list of probability distributions such as: 
\begin{equation}
\vec{s} = 
\begin{pmatrix}
p(a_0|M_0) \\
p(a_1|M_0) \\
\vdots \\
p(a_{l_0}|M_0) \\
\hline 
p(b_0|M_1) \\
p(b_1|M_1) \\
\vdots \\
p(b_{l_1}|M_1) \\
\hline 
\vdots \\
\hline 
p(z_0|M_{K-1}) \\
p(z_1|M_{K-1}) \\
\vdots \\
p(z_{l_{K-1}}|M_{K-1}) 
\end{pmatrix},
\end{equation}
where $\forall j$, $\{p(a_0 | M_j), \cdots, p(a_{l_j} | M_j) \}$ is a probability distribution, namely, $\forall i$, $0 \leq p(a_i|M_j) \leq 1$ and $\sum_i p(a_i|M_j)=1$. Each probability distribution $\{p(a_0 | M_j), \cdots, p(a_{l_j} | M_j) \}$ is understood as follows.
For a state $\vec{s}$, if we ``measure'' $M_j$, we obtain the ``outcome'' $a_i$ with probability $p(a_i | M_j)$. 
The definition of a state implies that, by ``measuring'' $M_0, \cdots, M_{K-1}$ and obtaining the probability distribution of ``outcomes'' for each $M_i$, we can specify all properties of the state. Such a set of ``measurements'' $\{ M_0 ,\cdots, M_{K-1}\}$ is referred to as a set of {\it fiducial measurements}. In quantum theory, any informationally complete set of measurements can be fiducial measurements. 
The set of all states is called the {\it state space}.

A {\it measurement} in GPTs is defined by a set of vectors $\{ \vec{e}_i \}_i$ satisfying that
for any state $\vec{s}$, $0 \leq \vec{e}_i \cdot \vec{s} \leq 1$ and $\sum_i \vec{e}_i \cdot \vec{s} =1$, that is,
$\vec{e}_i \cdot \vec{s}$ gives the probability to obtain the outcome $i$ when the state is $\vec{s}$. Each vector $\vec{e}_i$ in a measurement is referred to as an {\it effect}. 
A fiducial measurement is a special type of measurement such that their effects are represented by vectors of which all elements are zero but one element is one such as $(1,0, 0\cdots)^\mathrm{T}$ where $\mathrm{T}$ denotes a transposition.
 
We also define a {\it maximal measurement}. Let $N$ be the maximal number of states that can be distinguished by a single measurement. A maximal measurement is a measurement that can deterministically distinguish $N$ states by performing the measurement only once.
In quantum theory, $N$ is equal to the dimension of the Hilbert space and any rank-1 projective measurements are maximal measurements.
Whether or not a set of effects yields valid probabilities is dependent on the theory. The most general set of effects for a qubit measurement (see eqns~\ref{eq:gen_effect_1} and \ref{eq:gen_effect_2} in Appendix~\ref{app:general_effect}) can lead to probabilities below 0 or above 1 if applied to some (non-quantum) states, such as $\left(1, 0 ~|~ 1, 0 ~|~ 1, 0\right)^\mathrm{T}$.

Finally, {\it reversible transformations} in GPTs are defined by any linear maps that transform a state space to itself.
Since all such automorphisms form a group, reversible transformations in GPTs are described by a group, $G$.

In the following, we present three examples expressed in the framework of GPTs.
We summarize the state space and the automorphisms of each example in Fig.~\ref{fig:statespaces}.

\begin{figure}[ht!]
    \begin{subfigure}[t]{0.5\textwidth}
        \centering
	\includegraphics{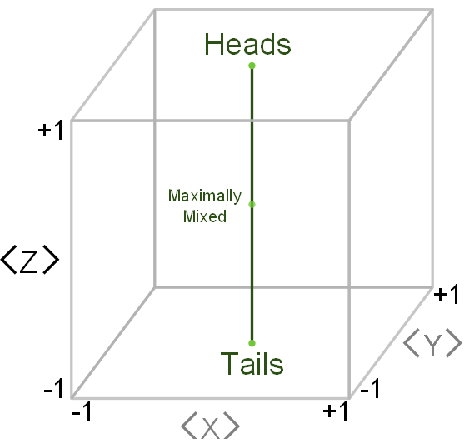} 
	\caption{{\bf Classical bit:} $C_2$ \\ This classical state space corresponds to flipping a coin. The extremal points represent the coin being known to be in the {\em heads} (Z=1) or {\em tails} (Z=-1) state. All points on the line between correspond to some uncertainty about the coin's state. 
}
	\label{fig:statespace_classical}
	~ \vspace{0.5cm}
    \end{subfigure} 
    \begin{subfigure}[t]{0.5\textwidth}
	\centering
	\includegraphics{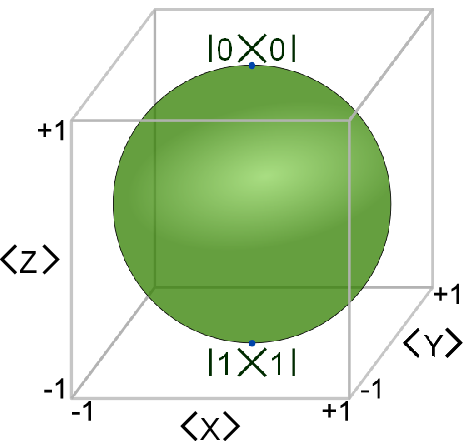} 
	\caption{{\bf Qubit:} $SO(3)$ \\ This state space is the Bloch sphere for normalised qubit states.}
	\label{fig:statespace_qubit}
	~ \vspace{0.5cm}
    \end{subfigure}
    \begin{subfigure}[t]{0.5\textwidth}
	\centering
	\includegraphics{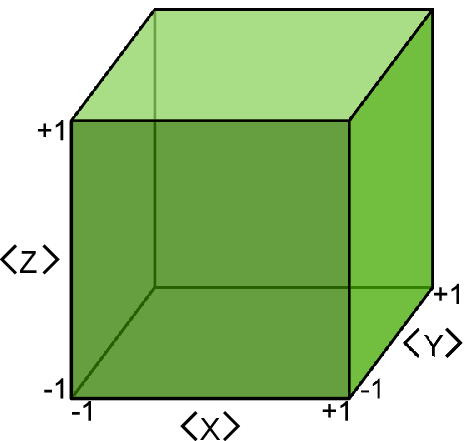} 
	\caption{{\bf $m$-in-$2$-out gbit:} $\varsigma_{2^{m-1}} \rtimes C_2$ \\ (shown: $m=3$) \\ This state space includes states not allowed in quantum theory, such as the corners, where one can simultaneously predict the outcomes for $X$,$Y$ and $Z$ measurements with certainty.}
	\label{fig:statespace_gbit}
    \end{subfigure}
\caption{The normalised state spaces of a classical bit, a qubit and an~m-in~2-out gbit}
\label{fig:statespaces}
\end{figure}

\inlineheading{Classical bits}
A classical bit (such as would be required to describe a coin flip) has a trivial structure, composed only of one measurement $M$ such that the state can be expressed by $\vec{s} = (p(0|M), p(1|M)) = (\lambda, 1-\lambda)$ where $\lambda \in [0,1]$, and the state space is a line (see Fig~\ref{fig:statespace_classical}) The states $(1,0)$ and $(0,1)$ correspond to the system being definitely in the $M=0$ (heads) or $M=1$ (tails) states, respectively; but if we are uncertain of the state of the bit (such as if we flipped a coin but hid it under our hands before checking the outcome), then it can be in a probabilistic mixture of the two. The allowed reversible transformations are to either flip the bit, or leave it unchanged: the cyclic group of degree two, $C_2$.

\inlineheading{Qubits}
A state of a normalised (density matrix of trace 1) qubit is characterized by fiducial measurements $\{X, Y, Z\}$ that correspond to measurements in
each Pauli basis~\cite{NielsenC00}. A state is represented by $\vec{s}=(p(i|W))_{\{i=0,1; W=X,Y,Z\}}$. 
Note that as $p(1|W)$ is given by $1-p(0|W)$, the statistics of a qubit measurement can be determined by just $p(0|W)$.

The commutation relations between the Pauli operators lead to the {\em uncertainty} restriction on the state space such that $\sum_{W=X,Y,Z}(p(0|W)-\frac{1}{2})^2 \leq \frac{1}{4}$. Hence, the state space is given by a $3$-dimensional ball (see Fig.~\ref{fig:statespace_qubit}).
Reversible transformations are represented by rotations of the sphere, that is, $O(3)$. 
By requiring complete positivity when the qubit is embedded into a larger system, reflections are not allowed in quantum theory\cite{Peres95} and one restricts the reversible transformations to $SO(3)$.

\inlineheading{Gbits}
{\it Gbits} are a generalization of qubits, which do not have constraints on the state space. Consider $m$ fiducial measurements $\{X_0, \cdots, X_{m-1} \}$ such that each fiducial measurement has $n$ outcomes. Such a system is called an {\it $m$-in $n$-out gbit}. A state is given by $\vec{s}=(p(i|W))_{\{i=0,1,\cdots, n-1; W=X_0,\cdots,X_{m-1} \}}$ and $p(i|W)$ takes values between $0$ and $1$ as long as it satisfies $\sum_i p(i|W) =1$.

In particular, $3$-in $2$-out gbits have been studied by analogy with qubits. Similarly to qubits,  $p(1|W)=1-p(0|W)$.
Since there is no constraint on states, the state space is a cube (see Fig.~\ref{fig:statespace_gbit}). 

The allowed reversible transformations are any maps which transform a cube to a cube (i.e. its {\em automorphism group}), which is the polyhedral group of order $6$, $\mathcal{T}_6$. Note that a polyhedral group $\mathcal{T}_6$ is isomorphic to the semi-direct product of the symmetric group of degree $4$, $\varsigma_4$, and a reflection $C_2$, namely, $\mathcal{T}_6  \cong  \varsigma_4 \rtimes C_2$. This can be easily seen from the fact that any actions of $\mathcal{T}_6$ can be expressed by permutations of four diagonal lines $\varsigma_4$ and a reflection $C_2$.
In analogy with quantum theory, we can consider a theory that does not include any reflections. In such a case, transformations are simply given by $\varsigma_4$.
More generally, the state space of $m$-in $2$-out gbits is given by an $m$-dimensional hypercube and the transformations are $\varsigma_{2^{m-1}} \rtimes C_2$ where $\varsigma_{2^{m-1}}$ is the symmetric group of degree $2^{m-1}$.

Due to the lack of symmetry, a non-quantum state space generally has more restrictions on the allowed transformations than a qubit.
In Ref.~\cite{GrossMCD10}, it has been shown that all reversible transformations of gbits correspond to relabelling of the outcomes or the individual gbits, leading to the conclusion that the reversible computational power of gbits cannot exceed that of classical computational power.

\subsection{Spekkens' toy model} \label{Sec:STM}
In this section, we discuss a popular hidden variable model first introduced by Robert Spekkens, which has properties similar to quantum theory\cite{Spekkens07,vanEnk07}.

For a single Spekkens' bit, we consider a hidden variable which can take one of four possible values, which we label $1$, $2$, $3$ and $4$. We constrain our knowledge of the system such that we can only determine the state of the system down to one of one of two possible different states. We label a state where the hidden variable could be in $i$ or $j$ by the expression $i \lor j$. If we were to visualise the ontic system as being a ball in one of four possible slots arranged in a grid, we can only ascertain which row, or which column or which diagonal the ball is in (but never more than one of these facts simultaneously).

It is possible to, by analogy with a qubit, assign labels to these states of knowledge such that they represent the outcome of one of three possible binary measurements ($X$, $Y$ and $Z$). A full set of possible labels for the single bit case is shown in Table \ref{tab:spekkens}. These states are known as the {\em epistemic states} of the system, as they describe our knowledge of the system.

\begin{table}[h]
 \setlength{\unitlength}{0.5cm}
 \begin{center}
  \begin{tabularx}{6.5cm}{ | p{0.5cm}p{0.5cm}p{0.5cm} | X | c | }
\hline
\multicolumn{3}{|c|}{Measurement} & \centering{ Hidden} & Visual\\
\multicolumn{3}{|c|}{outcome} & \centering{variable} & representation\\
\hline
\raggedleft{X} &\centering=& +1	&  \centering{$1 \lor 3$} & \raisebox{-0.15cm}{\spek[13]} \\
\raggedleft{X} &\centering=& -1 &  \centering{$2 \lor 4$} & \raisebox{-0.15cm}{\spek[24]} \\
\raggedleft{Y} &\centering=& +1 	&  \centering{$1 \lor 4$} & \raisebox{-0.15cm}{\spek[14]} \\
\raggedleft{Y} &\centering=& -1 &  \centering{$2 \lor 3$} & \raisebox{-0.15cm}{\spek[23]} \\
\raggedleft{Z} &\centering=& +1 	&  \centering{$1 \lor 2$} & \raisebox{-0.15cm}{\spek[12]} \\
\raggedleft{Z} &\centering=& -1 &  \centering{$3 \lor 4$} & \raisebox{-0.15cm}{\spek[34]} \\
\hline
  \end{tabularx}
   \caption{\label{tab:spekkens} One possible labelling for single-bit states in Spekkens' toy theory.}
 \end{center}
\end{table}

The model has a well-defined measurement update rule, such that when a measurement is made on the state, we randomise the hidden variable, so that it is equally likely to be in either of the possible ontic states corresponding to the measurement. This means, for example, if we were to measure the state $Z=1$ ($1\lor2$) in the $X$ basis, we would (with equal probablity) get a value of $X=1$, putting our system into state $1\lor3$, or get a value of $X=-1$, putting our system into state $2 \lor 4$. In this sense, an uncertainty principle is built into the model- as we can never know the exact position of the hidden variable, we are restricted to knowing at most one of $X$, $Y$ or $Z$ simultaneously.

Allowed reversible transformations in this theory correspond to the permutation of the hidden variable. As the hidden variable can take four possible outcomes, this is the simplex group $S_4$. We note that if the operation $g \in S_4$ acts on the hidden variable $g: i \to g(i)$, then the operation on the epistemic states is to take $i \lor j \to g(i) \lor g(j)$. We note that because $g \in S_{4}$, each $i$ is taken to a unique value, and so if $i \neq j$, then $g(i) \neq g(j)$. Hence, acting on the underlying hidden variable with $S_4$, we map valid epistemic states to valid epistemic states.

\inlineheading{Representation in the GPT framework}
The hidden variable is a single classical measurement with four possible outcomes (a 4-simplex), and its normalised state-space in the GPT framework is the hull spanned by the ontic states $\left\{ (1, 0, 0, 0)^\mathrm{T}, (0, 1, 0, 0)^\mathrm{T}, (0, 0, 1, 0)^\mathrm{T}, (0, 0, 0, 1)^\mathrm{T} \right\}$. 
However, as these ontic states are never directly measurable this is not the most practical convex representation in which to consider the behaviour of Spekkens' toy model.
Instead, we recall that there are three allowed binary measurements for a single Spekkens' bit, and so we can plot the four ontic states in a 6-dimensional representation corresponding to the outcomes of these three binary measurements:  $\raisebox{-0.15cm}{\spek[1]}=\left(1, 0 | 1, 0 | 1, 0\right)^\mathrm{T}$, $\raisebox{-0.15cm}{\spek[2]}=\left(0, 1 | 0, 1 | 1, 0\right)^\mathrm{T}$, \mbox{$\raisebox{-0.15cm}{\spek[3]}=\left(1, 0 | 0, 1 | 0, 1\right)^\mathrm{T}$} and \mbox{$\raisebox{-0.15cm}{\spek[4]}=\left(0, 1| 1, 0 | 0, 1\right)^\mathrm{T}$}, as drawn in Figure \ref{fig:spek_tet_ontic}  in the expectation value picture. In this representation, each axis directly corresponds to the outcome of a different choice of measurement. These representations are isomorphic to each other, and so the set of homomorphisms on the ontic state space (and hence the set of allowed reversible transformations) is the same for both of them.

\begin{figure}[ht]
\centering
    \begin{subfigure}[t]{0.5\textwidth}
                \centering
                \includegraphics{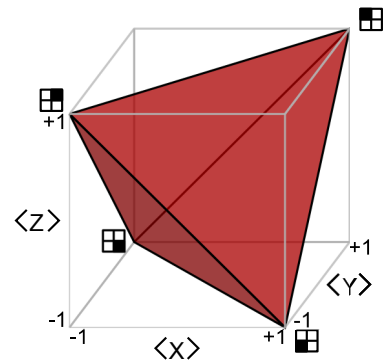}
                \caption{Representation of the ontic (hidden variable) states of Spekkens' toy model (tetrahedron).}      
	\label{fig:spek_tet_ontic}
	\end{subfigure}
	\begin{subfigure}[t]{0.5\textwidth}
	\centering
	\includegraphics{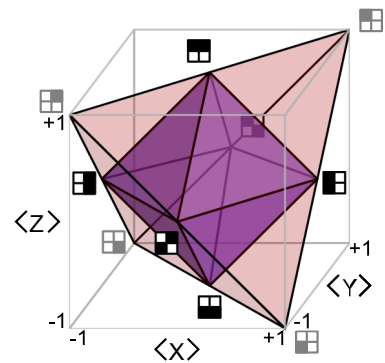}
	\caption{Representation of the epistemic states of Spekkens' toy model (octahedron).}
	\label{fig:spek_tet_epistemic}
\end{subfigure}
\caption{Spekkens' toy model state spaces.}
\end{figure}

In the convex framework, an equiprobable mixture of two states is represented as the half-way point on the line between the states. Thus, we can place the labels for states $i \lor j$ on the lines between points $i$ and $j$ and so label the epistemic states of the theory as in Figure \ref{fig:spek_tet_epistemic}. We note that in this representation, the epistemic states will have all the correct statistics for the model: each corner of the octahedron takes an extremal value for one measurement, and is totally mixed in all the others. Further more, if one were able to prepare a pure ontic state, and then perform any of the typical epistemic measurements, the state would project to the correct epistemic state in each case.

The allowed reversible transformations of the model must conform to the symmetry group of the hidden variable tetrahedron, rather than the embedded octahedral. From this visualisation, it is clear to see why some transformations are allowed and others are forbidden in the model: not all symmetries in the octahedron are also symmetries of the tetrahedron.  For example, consider the cyclic rotation of $X+ \to Y+ \to X- \to Y- \to X+$, whilst keeping $Z+$ and $Z-$ unchanged. This would be an allowed symmetry of the octahedron; but is actually forbidden in the Spekkens' model, as can be seen by the action this would have on the hidden variable: such a $90\degree$ rotation around the Z axis is not a symmetry of the tetrahedron, but would rather take the ontic state space onto the mirror image of itself.

By construction, the state space of an octahedron embedded within a tetrahedron shares the measurement statistics and transformation rules of Spekkens' toy model, and so is a valid representation in the convex framework.

%
%
\section{Phase group} \label{Sec:PG}

In this section, we generalize the concept of phase in the context of GPTs. We first consider phase in quantum theory in Sec.~\ref{SS:PGQ} and then generalize it into GPTs in Sec.~\ref{SS:PGGPT}.

\subsection{Phase group in quantum theory} \label{SS:PGQ}
For simplicity, we deal with a pure state $\ket{ \phi}$ in a Hilbert space $\mathcal{H}=\mathbb{C}^D$ with dimension $D$.
The state $\ket{\phi}$ expanded in a given basis $\Upsilon = \{ \ket{u_a} \}$ is given by $\ket{\phi}= \sum r_a e^{i \phi_a} \ket{u_a}$ where 
amplitudes $\{ r_a \}$ define the probabilities when the measurements are performed in the basis $\Upsilon$.
We define a set of unitaries $G_\Phi^{\Upsilon}$ that only change the arguments of the complex coefficients in the basis $\Upsilon$, that is, 
\begin{equation}
G_\Phi^{\Upsilon} = \left\{ \sum_{a=0}^{D-1} e^{i \phi_a} \ketbra{u_a}{u_a}  \right\} _{ \{ \phi_a \} \in [0,2 \pi )^{\times D} }. \label{Eq:quant}
\end{equation}
The set of unitary operations $G_\Phi^{\Upsilon}$ forms a group and 
we refer to the group as a {\it phase group associated with the basis $\Upsilon$}.

The elements of the phase group associated with the basis $\Upsilon$ do not change the probability distribution of measurement outcomes performed in the basis $\Upsilon$. Based on this interpretation, the phase group $G_\Phi^{\Upsilon}$ in quantum theory is understood as follows.
Let $\Upsilon=\{ \ket{u_a} \}$ be a basis in a Hilbert space $\mathcal{H}$.
The phase group associated with the basis $\Upsilon = \{ \ket{u_a} \}$, $G_\Phi^{\Upsilon}= \{ U_k^{\Upsilon} \}_k$, is the maximum subgroup of a unitary group of which all elements $U_k^{\Upsilon}$ satisfy that $\forall \ket{\phi} \in \mathcal{H}$ and $\forall a, k$,
\begin{equation}
|\braket{u_a}{\phi}|^2=|\bra{u_a} U_k^{\Upsilon}  \ket{\phi}|^2.
\end{equation}

More generally, we can consider phase groups associated with a positive operator valued measure (POVM) $\{ M_i \}$, such that $0\leq \mathrm{tr} M_i\rho \leq 1\,\text{and} \sum_i \mathrm{tr} M_i \rho=1 \ \forall \rho , M_i$, where $\rho$ is a density matrix with trace 1. 
There is a neat characterisation of the phase group with respect to these more general measurements: Let $\mathcal{H}$ be a Hilbert space and $\mathcal{B}(\mathcal{H})$ be a set of states. The phase-group associated with a POVM $\{ M_i \}$ is a subgroup of the unitary group acting on the Hilbert space and all elements should satisfy
\begin{equation}
\forall \rho \in \mathcal{B}(\mathcal{H}), \forall i, \hspace{1em} \mathrm{tr} M_i \rho = \mathrm{tr} M_i U \rho U^{\dagger},
\end{equation}
which implies
\begin{eqnarray}
\forall i, & \hspace{1em} & U^{\dagger} M_i U  = M_i, \\ 
\forall i, & \hspace{1em} & [M_i, U]  = 0.
\end{eqnarray}

Thus we see that the phase group associated with a POVM consists of the maximal set of unitary operators that commute with each element of the POVM.
If all elements in the POVM $\{ M_i \}$ are commutable ($\forall i,j,\ [M_i, M_j]=0$), the phase group is given by $\{ \sum_k e^{\i \theta_k} \ketbra{m_k}{m_k} \}_{\theta_k \in [0, 2\pi)}$ where $\{\ket{m_k}\}$ is the common eigenbasis of the POVM $\{M_i\}$, so it takes the same form as the phase group of a projective measurements. 

By considering phase groups with respect to all POVMs, one can extrapolate between the group of all unitaries and the identity. The phase group with respect to any informationally complete measurement is the identity. Consider for example choosing to measure in the eigenbasis of X, Y or Z at random when measuring a qubit (without forgetting which measurement was chosen). When one outcome is assigned to each of the six possibilities $X=\pm 1, Y=\pm 1, Z=\pm 1$, these probabilities uniquely specify the state such that no phase transforms are possible. Conversely, the phase group with respect to the single outcome $\{M\}=\id$ (`is there a state?')  is the full group of unitaries.

\subsection{Phase group in the GPT framework} \label{SS:PGGPT}
We generalize the idea of the phase group for any GPT. 

\begin{defin}[Phase group $G_\Phi$] \label{Def:PG}
Consider a given GPT with a state space and group of allowed reversible transformations, $G$. Let $\{\vec{e}_i\}_{i=1}^M$ be a measurement. The {\it phase group} $G_\Phi$ associated with the measurement is the maximal subgroup of all transformations $T\in G$ that leave all outcome probabilities of the measurement invariant, that is, for any states $\vec{s}$ and $\forall i$,
\begin{equation}
\vec{e_i} \cdot \vec{s}=\vec{e_i} \cdot (T \vec{s}).
\end{equation} 
\end{defin} 

The phase groups are characterized by the measurement $\{\vec{e}_i\}_{i=1}^M$. In particular, the number of effects in the measurement, $M$, determines the amount of freedom in the phase group. In general, larger $M$ imposes more restrictions on the group so that the phase group tends to be smaller.

A special type of phase group is one associated with a {\em maximal measurement}. There may be alternative definitions one may consider, but for concreteness and clarity (particularly in Section~\ref{Sec:nontrivial}) we refer only to the following definition:

\begin{defin}[Maximal measurement] \label{Def:MM}
A maximal measurement is one which distinguishes the maximal number of pure states possible for the theory in question.
\end{defin}

In the following sections, we show the phase groups in three example state spaces: classical bits, qubits and gbits. 

\inlineheading{Classical bits}
For a single classical bit, the phase group of a maximal measurement is composed only of an identity operator since there exists only one fiducial measurement. In this case, we say that the phase group is {\em trivial} as it only contains the identity operator.

For a set of $N > 1$ classical bits, it is possible to have a non-trivial phase group for non-maximal measurements. For example, if we measure the parity of the system, permutations of the bits and bit-flips made only in pairs will not change the parity, and so these operations are in the phase group of the parity measurement.

In an even more trivial example, if we have a system of two bits and make a measurement on the first bit, any operation on the second bit will not affect the first bit, and so such reversible transformations on the second bit are in the phase group of this non-maximal measurement. 
What sets apart quantum theory from the classical sceneraio is not that non-trivial operations are possible, but rather that they are possible even when the measurement is maximal.

\inlineheading{Qubits}
For qubits, the phase group $G_\Phi$ is a subgroup of $SO(3)$. 
We first consider a phase group associated with a maximal measurement.
Since the state space for qubits is isotropic, we just consider the phase group associated with the $Z$ measurement without loss of generality. 
A transformation that does not change a probability distribution of outcomes of the $Z$ measurement is a rotation on the $X$-$Y$ plane, namely, $SO(2)$ with an axis in the $Z$ direction. The corresponding unitary operations are $diag\{e^{i \phi_0},  e^{i \phi_1} \}$ in the $Z$ basis, so that it coincides with the phases in quantum theory.

\inlineheading{Gbits}
We consider the phase group in $m$-in $n$-out gbits and demonstrate that non-abelian phases appear in $m$-in $2$-out gbits for $m \geq 4$ (or $m \geq 3$, if reflections are included in the theory's group of allowed transformations).
We also show that the phase group depends on the choice of measurement since a state space for gbits is not isotropic.\\

\inlinesubheading{$3$-in $2$-out gbits} 
We first consider the phase group associated with a fiducial measurement $\{\vec{e}_0, \vec{e}_1 \}$:
\begin{align}
\label{eq:z0_effect} \vec{e}_0 &= (0,0 ~|~ 0,0 ~|~ 1,0)^T, \\
\label{eq:z1_effect} \vec{e}_1 &= (0,0 ~|~ 0,0 ~|~ 0,1)^T,
\end{align}
which corresponds to the $Z$ measurement. Then, the phase group does not vary $p(i|Z)$ $(i=0,1)$ and changes only $p(i|X)$ and $p(i|Y)$ $(i=0,1)$. 
Since $p(1|W)=1- p(0|W)$, the phase group is composed of transformations that mix $p(0|X)$ and $p(0|Y)$. 
Recalling that the phase group $G_\Phi$ is a subgroup of $\mathcal{T}_6  \cong  \varsigma_4 \rtimes C_2$, the phase group associated with the $Z$ measurement is given by a group of the symmetry of a square, which is a dihedral group of order $4$, $\mathcal{D}_4$.

On the other hand, when a maximal measurement is not the fiducial measurement and it is given by
\begin{align}
\vec{f}_0 &= \frac{1}{3} (1,0 ~|~ 1,0 ~|~ 1,0)^T, \\
\vec{f}_1 &= \frac{1}{3} (0,1 ~|~ 0,1 ~|~ 0,1)^T,
\end{align}
$\sum_W p(0|W)$ should be invariant under the action of the phase group.
Then, the phase group is given by rotations along the axis vector $(1,1,1)$ in the state space, which is a dihedral group of order $3$, $\mathcal{D}_3$ (or alternatively the group of a 3-simplex, $S_3$). 

We also consider the phase groups associated with non-maximal measurements composed of $M$ effects.
As a trivial example, when $M=1$, the measurement is given by $\{ \frac{1}{3} (1,1 ~|~ 1,1 ~|~ 1,1)^T \}$. Thus, the phase group is equivalent to all transformations, that is, $G_\Phi=\mathcal{T}_6$. When the measurement contains $M=4$ effects given by
\begin{multline}
\Bigg\{ \frac{1}{2}(1,0~|~0,0~|~0,0), \frac{1}{2}(0,1~|~0,0~|~0,0),\\ \frac{1}{2}(0,0~|~1,0~|~0,0), \frac{1}{2}(0,0~|~0,1~|~0,0) \Bigg\},
\end{multline}
it is straightforward to see that the phase group consists only of a reflection with respect to the $X$-$Y$ plane and the identity element since the phase group can vary the probability distribution of the outcomes of the $Z$ measurement.

Thus, the larger $M$ results in a smaller phase group. The $M$ determines the degree of freedom that should be invariant under the action of the phase group.
Note that, the phase groups are in general non-abelian for $m\geq3$ if we allow a reflection as a transformation. However, 
if we take an analogy with quantum theory and exclude reflections, then all phase groups in $3$-in $2$-out gbits are abelian. \\

\inlinesubheading{$4$-in $2$-out gbits} 
A state of a $4$-in $2$-out gbit is given by $\vec{s}=(p(i|W))_{\{i=0,1; W=X_0,X_1, X_2, X_3\}} $ and the state space is a $4$-dimensional hypercube. 
The transformations of the $4$-in $2$-out gbits are given by $\varsigma_{8} \rtimes C_2$:  the set of all maps from a $4$-dimensional hypercube to itself. We show that there exists a non-trivial non-abelian phase group for $4$-in $2$-out gbits. 

Firstly, we consider the phase group associated with maximal measurements.
When we take one of the fiducial measurements as a maximal measurement, the phase group is given by a polyhedral group of order $6$, $\mathcal{T}_6$.  To see this, consider a phase group associated with the $X_i$ measurements. The phase group changes
the probability distribution of other fiducial measurements. Then, the phase group in $4$-in $2$-out gbits associated with a maximal measurement $X_i$ ($i=0,1,2,3$) is equivalent to the group of all transformations of $3$-in $2$-out gbits, which is given by $\mathcal{T}_6$.
If we take a maximal measurement that is not one of the fiducial measurements, the phase group differs from $\mathcal{T}_6$. For instance, when the maximal measurement is given by $\{ \frac{1}{4}(1,0~|~1,0~|~1,0~|~1,0), \frac{1}{4}(0,1~|~0,1~|~0,1~|~0,1)\}$, the phase group is $\mathcal{S}_4$.

The phase groups $\mathcal{T}_6$ and $\mathcal{S}_4$ are non-abelian groups since they represent a symmetry of a cube and that of a tetrahedron, respectively. The symmetry group of $d$ dimensional objects is non-abelian when $d \geq 3$. 
Thus, $m$-in $2$-out gbits ($m >3$) have non-trivial non-abelian phase groups.

Nonetheless, there exist abelian phase groups in the $4$-in $2$-out gbits.
Let us consider the non-maximal measurement given by 
\begin{equation}
\{ \mathbf{g}_1,  \mathbf{g}_2, \mathbf{g}_3, \mathbf{g}_4\} = \frac{1}{2}
\Bigg\{
\begin{pmatrix}
0 \\
0 \\ \hline
0 \\
0 \\ \hline
0 \\
0 \\ \hline
1 \\
0 
\end{pmatrix},
\begin{pmatrix}
0 \\
0 \\ \hline
0 \\
0 \\ \hline
0 \\
0 \\ \hline
0 \\
1 
\end{pmatrix},
\begin{pmatrix}
0 \\
0 \\ \hline
0 \\
0 \\ \hline
1 \\
0 \\ \hline
0 \\
0 
\end{pmatrix},
\begin{pmatrix}
0 \\
0 \\ \hline
0 \\
0 \\ \hline
0 \\
1 \\ \hline
0 \\
0 
\end{pmatrix}\Bigg\}.
\end{equation}
In this case, the action of the phase group preserves the measurement outcomes of $X_2$ and $X_3$. 
The remaining state space is a square defined by two variables $p(0|X_0)$ and $p(0|X_1)$. 
Hence, the corresponding phase group is the rotational symmetry of a square, which is abelian if 
we exclude reflections.

More simply, we could also consider a set of six effects giving the probabilities associated with measurements $X_1$, $X_2$ and $X_4$, such that the phase group $C_2$ corresponds only to flipping the probabilities associated with $X_3$.

\inlineheading{Spekkens' toy model}
In Spekkens' toy theory, if we consider the phase group formed by fixing one measurement (such as the Z direction), we obtain the subgroup of permutations $(12)(34)$ which do not change the epistemic states associated with Z. This group, $Z_2 \oplus Z_2$, corresponds to either swapping the top two ontic states or swapping the bottom two ontic states. This phase group is in agreement with the group of allowed operations as described by Coecke et al.\cite{CoeckeES11}.

It might be possible to consider a more exotic measurement in the model, where we measure on the diagonal axis associated with the effects 
$\vec{s}_0 = (1,0~|~1,0~|~1,0)^T$ and $\vec{s}_1 = (0,1~|~0,1~|~0,1)^T$. The most extremal points in the octahedral phase-space are equiprobable mixtures of $\Big($\raisebox{-0.15cm}{\spek[12]}, \raisebox{-0.15cm}{\spek[24]} and \raisebox{-0.15cm}{\spek[14]}$\Big)$ and the equiprobable mixture of $\Big($\raisebox{-0.15cm}{\spek[34]},~\raisebox{-0.15cm}{\spek[13]}~and~\raisebox{-0.15cm}{\spek[23]}$\Big)$. In both state spaces, the whole system has three-fold rotational symmetry about this axis, plus three planes of reflective symmetry: exactly the symmetries of a triangle. Thus, we see that such a measurement has the phase group $S_3$.

We remark that in the tetrahedral space there is an asymmetry, as the second of these states is also an extremal point in the tetrahedral space, whereas the tetrahedron extends beyond the first to include the corner state $(1,0~|~1,0~|~1,0)^T$- this means that although we may be able to measure to distinguish between these two states, there does not exist a valid linear operation in the framework which can exchange them. 

One possible interpretation of such a measurement and its associated phase group would be to say that the extremal points of our measurement are caused by a three-way mixture of measurements, which we could perform by choosing uniformally randomly which of the three primary bases (X, Y or Z) to measure (assuming that making any of these measurements will collapse the system to a ``pure'' epistemic state on the octahedron) and then taking our result, but discarding any information about which basis we used. From this process, it's clear that we have the freedom to permutate the labellings of the bases without affecting our result (which makes it almost self-evident that the phase group should be the permutation group $S_3$), so long as we make sure for each constitutent measurement, we're only comparing the outcomes made in the same basis.

\section{Irreversible phase dynamics}
As well as reversible dynamics, which lead to the natural group structure as discussed in Section~\ref{Sec:PG}, it is possible to consider other operations that may be inherently non-reversible, but still preserve the evaluated output with the a set of effects of some measurement. We refer to this sort of operation as being part of the {\em phase dynamics}- where the term ``phase'' is drawn by analogy with the measurement-preserving nature of the operation.
As these operations have no unique inverse they do not form a group structure, but rather form a semi-group, much like the set of completely-positive maps acting on a density matrix in quantum mechanics.

\subsection{Examples}
\inlineheading{Quantum decoherence}
In quantum theory, this is analogous to {\em decoherence}. Consider the state corresponding to a pure X eigenstate:
\begin{equation}
\mathbf{s} = \begin{pmatrix}
1 \\
0 \\ \hline
1/2 \\
1/2 \\ \hline
1/2 \\
1/2\end{pmatrix},
\end{equation}

Consider the operation $D$ that replaces all $X$ and $Y$ statistics with $(1/2, 1/2)$ (such as, for example, flipping the $X$ state with probability $1/2$, or leaving it unchanged, with probability $1/2$) will change the state into the maximally mixed state:
\begin{equation}
\mathbf{s_\mathrm{mix}} = \begin{pmatrix}
1/2 \\
1/2 \\ \hline
1/2 \\
1/2 \\ \hline
1/2 \\
1/2\end{pmatrix},
\end{equation}

Considering the effects of the $Z$ measurement $\vec{e}_0$, $\vec{e}_1$, (defined in Eqns.~\ref{eq:z0_effect} and \ref{eq:z1_effect}), we see that $\vec{e}_i \cdot \mathbf{s} = \vec{e}_i \cdot \mathbf{s_\mathrm{mix}}$- and so the statistics associated with such a measurement is unchanged. In quantum theory, this would correspond with replacing a coherent superposition in some basis with a classical mixture displaying the same measurement statistics for one basis.

One can also consider applying different elements of the phase group with some classical probability. As none of the phase group operations disturb the measurement associated with the phase group, the composite operation will also preserve this measurement. For example, if one combines the operations of a small unitary $Z$ rotation around the Bloch sphere with some small random chance of making a jump across to the other side (i.e. a 180\,\degree~Z rotation) the joint transformation corresponds to a path inwardly spiralling around the Bloch sphere, preserving the Z statistics.

\inlineheading{`Measurement setting' on a gbit}
A related operation that is mathematically possible on a gbit (but not realisable on a qubit), is to always set the $X$ statistics of a system to $(1,0)$ without changing the statistics of any other measurements (in some ways making the state `more pure'), such as by the operation $P$,
\begin{equation}
P = \begin{pmatrix}
1 & 1 & 0 & 0 & 0 & 0 \\
0 & 0 & 0 & 0 & 0 & 0 \\
0 & 0 & 1 & 0 & 0 & 0 \\
0 & 0 & 0 & 1 & 0 & 0 \\
0 & 0 & 0 & 0 & 1 & 0 \\
0 & 0 & 0 & 0 & 0 & 1
\end{pmatrix}
\end{equation}

\subsection{Phase dynamics are non-trivial only for non-classical state spaces}
\label{Sec:nontrivial}
Apart from considering examples of theories and whether they have non-trivial phase dynamics one may hope to make a more general statement concerning which features of a theory endow it with non-trivial phase. 
 
We define as is standard a theory as classical if a state is uniquely specified by the statistics for a single measurement with which it is possible to distinguish $N$ pure states (using the standard definition of $N$).  

We define the phase dynamics associated with a measurement as the set of all dynamics which leave the statistics of the measurement in question invariant. (Note that this may include irreversible dynamics).

\begin{thmnonumb}
Phase dynamics associated with a maximal measurement are non-trivial iff the theory is non-classical.
\end{thmnonumb}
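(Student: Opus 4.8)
The plan is to prove the two directions separately, using the characterisation of "classical" as: a single maximal measurement $\{\vec{e}_i\}_{i=1}^N$ is informationally complete, i.e.\ the map $\vec{s}\mapsto(\vec{e}_1\cdot\vec{s},\dots,\vec{e}_N\cdot\vec{s})$ is injective on the state space (and $N$ is the maximal number of distinguishable pure states). Note that because probabilities sum to one this map really lands in an $(N-1)$-dimensional affine space, so for a classical theory the state space is an $(N-1)$-simplex and the measurement distinguishes its $N$ extreme points.

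For the easy direction (classical $\Rightarrow$ trivial phase dynamics), suppose the maximal measurement $\{\vec{e}_i\}$ is informationally complete and let $T$ be any dynamics (reversible or not) leaving all $\vec{e}_i\cdot\vec{s}$ invariant. Then for every state $\vec{s}$ we have $\vec{e}_i\cdot(T\vec{s})=\vec{e}_i\cdot\vec{s}$ for all $i$; injectivity of the data map forces $T\vec{s}=\vec{s}$ for all $\vec{s}$, so $T$ is the identity on the state space and the phase dynamics semigroup is trivial.

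For the hard direction (non-classical $\Rightarrow$ non-trivial phase dynamics) I would argue contrapositively or constructively: if the theory is non-classical, then by definition no maximal measurement is informationally complete, so for any maximal measurement $\{\vec{e}_i\}_{i=1}^N$ there exist two distinct states $\vec{s}\neq\vec{s}'$ with $\vec{e}_i\cdot\vec{s}=\vec{e}_i\cdot\vec{s}'$ for all $i$. The goal is to manufacture from this coincidence a non-identity element of the phase dynamics. The clean construction is an irreversible one modelled on the decoherence/"measurement-setting" maps in the Examples subsection: pick such a coincident pair, and define $T$ to be an affine map that is the identity on the ``measurement data'' coordinates but collapses the complementary (hidden) coordinates onto a fixed value — concretely, post-select/average so that every state is sent to the unique state compatible with its observed statistics and with that fixed hidden value. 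Since $N$ pure states are still distinguished (the map is the identity on the affine subspace spanned by the extreme points the measurement resolves), the measurement remains maximal, $T$ preserves all $\vec{e}_i\cdot\vec{s}$ by construction, yet $T\neq\mathrm{id}$ because it acts non-trivially on $\vec{s}$ versus $\vec{s}'$. One must check that $T$ maps the state space into itself; the natural way is to take $T$ to be a convex retraction onto a suitable face or fibre, which automatically preserves convexity and normalisation.

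The main obstacle is precisely this last well-definedness check: showing that in an \emph{arbitrary} non-classical GPT one can always choose the ``fixed hidden value'' so that the resulting retraction stays inside the state space for \emph{every} input state simultaneously. In pathological state spaces the naive fibrewise projection might leave the state space. I would handle this by choosing the target along a fibre through an interior (e.g.\ maximally mixed) state, so that a small enough ``amount'' of the collapse — or averaging $T$ with the identity — keeps the image in the state space while remaining non-trivial; one then notes that a non-trivial convex combination $\lambda T+(1-\lambda)\mathrm{id}$ still preserves the measurement statistics and is still not the identity, which suffices. An alternative, if one wants a cleaner statement, is to invoke that the set of dynamics preserving the measurement is a closed convex (semigroup) set containing the identity, and argue its dimension is positive exactly when the data map has nontrivial kernel on the affine hull of the state space; then non-triviality is immediate from non-injectivity. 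I expect the write-up to use whichever of these is shortest given the conventions already fixed in Section~\ref{Sec:tech_intro}.
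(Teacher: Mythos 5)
Your classical direction is fine and matches the paper's. Your non-classical direction also has the same skeleton as the paper's — an irreversible collapse onto the data of the maximal measurement, which is non-trivial precisely because non-classicality supplies two distinct states with identical maximal-measurement statistics — but the step you yourself flag as ``the main obstacle'' is a genuine gap, and neither of your proposed patches closes it. Collapsing the hidden coordinates onto a literally fixed value (e.g.\ that of the maximally mixed state) need not land in the state space: over an extremal outcome distribution the fibre may consist of a single state whose hidden coordinates differ from your chosen constant, in which case $\lambda T+(1-\lambda)\mathrm{id}$ leaves the state space for \emph{every} $\lambda>0$, so shrinking towards the identity does not rescue the construction (and a uniform $\lambda$ working for all boundary states at once would in any case need justification). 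Your second alternative — that the measurement-preserving dynamics have positive dimension whenever the data map is non-injective — is essentially a restatement of what is to be proved: a direction in the kernel of the data map does not by itself yield an affine map sending the whole state space into itself.

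The paper closes this gap with a small but essential change of target: instead of a fixed hidden value, take the $N$ states $\ket{\mu_1},\dots,\ket{\mu_N}$ that the maximal measurement perfectly distinguishes (these exist by the definition of maximality, with $\braket{e_i}{\mu_j}=\delta_{ij}$) and define $T=\sum_i\ket{\mu_i}\bra{e_i}$, i.e.\ $T\vec{s}=\sum_i(\vec{e}_i\cdot\vec{s})\,\ket{\mu_i}$. Well-definedness is then automatic: the image is a convex combination of allowed states with nonnegative weights summing to one, so no interior-point or perturbation argument is needed. Preservation of the statistics is the one-line computation $\bra{e_i}T\ket{\eta}=\sum_j\delta_{ij}\braket{e_j}{\eta}=\braket{e_i}{\eta}$, and non-triviality follows exactly as you argue, since $T$ sends any two states with the same maximal-measurement statistics to the same output. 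Replacing your ``fixed hidden value'' by this barycentric interpolation between the $\mu_i$ makes your argument complete.
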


\begin{proof}We break it up into two cases.\\
{\em (i) theory is classical:} In this case a maximal measurement having 
its statistics frozen means the full state is frozen, thus the only allowed phase dynamics 
is the identity $\mathbbm{1}$, which changes no state. Thus if a theory is classical only trivial such dynamics exist.

{\em (ii) theory is non-classical:} We need to show that non-trivial phase dynamics always exists in this case. 
We take without loss of generality one of the fiducial measurements of the state vector to be maximal (implying it has $N$ outcomes). 
We take, without loss of generality, the first maximal fiducial measurement to be the one frozen. 
As we know $K>N$ there are still some free parameters associated with one or more additional fiducial measurements.

Consider the following transform: take the first maximal measurement. Take N effects: $\bra{e_1}$,$\bra{e_2}...\bra{e_N}$ (note that the bra-ket notation is now used for {\em real} vectors). Take N states  $\ket{\mu_1}...\ket{\mu_N}$ such that $\braket{e_i}{\mu_j}=\delta_{ij}\,\forall i,j$. Let the transform be
$$ T=\sum_i \ket{\mu_i}\bra{e_i}.$$
We want to show:
\begin{enumerate}
\item T always exists in a non-classical theory and is allowed (making the implicit assumption that any dynamics taking states to states are allowed). 
\item T constitutes phase dynamics of the maximal measurement: it leaves the measurement statistics invariant.
\item T is always non-trivial: it changes at least one state.
\end{enumerate}    

1. To prove that $T$ always exists we note the following. We can take
\begin{eqnarray}
 \bra{e_1}&=&(1\, 0...0\,|\,0...0) \\
 &\vdots & \\
 \bra{e_N}&=&(0...0\, 1\,|\,0...0).
\end{eqnarray}

These effects always exist and always yield probabilities summing to one for any states. We can moreover take 
\begin{eqnarray}
 \ket{\mu_1}&=&(1\, 0...0\,|\text{anything allowed})^T \\
 &\vdots & \\
 \ket{\mu_N}&=&(0...0\, 1\,|\text{anything allowed})^T.
\end{eqnarray}
These are always allowed states as we have assumed the state space contains N maximally distinguishable states associated with the first fiducial measurement.
We see that $\braket{e_i}{\mu_j}=\delta_{ij}\,\forall i,j$. T is an allowed transform as it is a matrix and takes states to states: it takes any state to a mixture of the $\ket{\mu_i}$ states, which is allowed as the states are allowed and all mixtures of allowed states are allowed. 

2. T is an example of phase dynamics associated with the first fiducial measurement by the following argument. Consider an arbitrary state $\ket{\eta}$. Then the probability of any outcome of the frozen measurement is given by
$$\braket{e_i}{\eta}.$$
After the transform we have 
\begin{eqnarray}
\bra{e_i}T\ket{\eta}&=&\bra{e_i}\sum_j \ket{\mu_j}\braket{e_j}{\eta}\\
&=&\sum_j \delta_{ij}\braket{e_j}{\eta}\\
&=& \braket{e_i}{\eta},
\end{eqnarray}
As T preserves the statistics of the measurement it is in the set of associated phase dynamics. 

3. T is always non-trivial. If there is a classical system there are some free parameters apart from those defined by 
the statistics of the maximal first fiducial measurement. For some particular distribution of the first measurement at least 
two possible states exist, call them $\ket{\eta_1}$ and $\ket{\eta_2}$. Yet T will output the same state for both of those input states:
\begin{equation}
T\ket{\eta_1}=T\ket{\eta_2},
\end{equation}
which is because the final state is uniquely determined by the probabilities of the maximal measurement for the input state. 
Thus T must change at least one of the states $\ket{\eta_1}$ and $\ket{\eta_2}$.
\end{proof}

\section{Interference} \label{Sec:Int}
In  this section, we show that in quantum theory the phase group plays an important role in systems that are said to exhibit interference. 
Thus we formulate quantum interference in the GPT framework, and extend this process to be applicable to all GPTs.

\subsection{Quantum interference}

\inlineheading{Young's double slit experiment}
In a single-photon version of the Young's double-slit experiment, the output measurement is no longer a binary variable, but instead encapsulates a continuous range of possible positions where the photon could land on the screen. The common physical meaning of the term `interference' describes the pattern which forms on this screen, which can not be determined just by considering the sum of spatial distribution probabilities from each slit in turn.

Adding a piece of glass in front of one of the slits changes the overall pattern, without changing the output distributions seen if each slit is considered on its own. Some part of the set-up has been changed without disturbing the output distribution statistics of each slit- the addition of glass to change which interference pattern we observe is therefore a {\em phase operation}.

\inlineheading{Mach-Zehnder interferometer}
A simpler example of a device exhibiting interference is the Mach-Zehnder interferometer (MZI) as illustrated in the single qubit circuit presented in Fig.~\ref{Fig:MZ}. 
Consider an initial state prepared in the computational basis: $\ket{0}$. Through unitary operations, the initial state is transformed to:
\begin{equation}
\ket{f} = U_H^{\dagger} \begin{pmatrix} 1 & 0 \\ 0 & e^{i\phi} \end{pmatrix} U_H \ket{0}, \label{Eq:Time}
\end{equation}
where $U_H=U_H^\dagger=\frac{1}{\sqrt{2}} \begin{pmatrix} 1 & 1 \\ 1 & -1 \end{pmatrix}$ is the Hadamard gate. 

\begin{figure}[ht]
\centering
  \includegraphics[width=0.9\linewidth, clip]{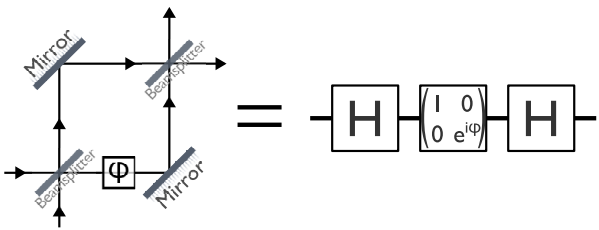}
  \caption{A Mach-Zehnder interferometer and \\ its equivalent quantum circuit.}
\label{Fig:MZ}
\end{figure}

We perform a measurement on the final state $\ket{f}$ in the computational basis $\{\ket{0}, \ket{1} \}$. The probability to obtain outcome $Z=\pm1$ is $P(Z\-=\-\pm1)=\frac{1 \pm \cos \phi}{2}$, which gives different probabilities as a function of the phase shift $\phi$. 
Just as in Young's double slits, the Z measurement statistics for the output of the MZI may be considered to be an {\em interference pattern}. Had the beam-splitters simply mixed the photon in a decoherent manner, then the output distribution would not depend on $\phi$.

\subsection{Quantum interference in the GPT framework}
The MZI circuit can be described in the GPT framework as follows:

The initial state $\ket{0}$ is represented by:
\begin{align}
\vec{s}_0 = \left(
\begin{array}{c}
P(+1|X) \\ 
P(-1|X) \\
\hline
\vspace{-0.25cm} \\
P(+1|Y) \\ 
P(-1|Y) \\
\hline
\vspace{-0.25cm} \\
P(+1|Z) \\
P(-1|Z)
\end{array} \right)
= \left( \begin{array}{c}
1/2 \\ 
1/2 \\
\hline
\vspace{-0.25cm} \\
1/2 \\ 
1/2 \\
\hline
\vspace{-0.25cm} \\
1 \\
0
\end{array} \right)
\end{align}
The unitaries $U_H$ and $\begin{pmatrix} 1 & 0 \\ 0 & e^{i\phi} \end{pmatrix}$ can be described by their actions on the probability vector:
\begin{eqnarray}
U_H & \leftrightarrow T_H:= \begin{pmatrix} 
0 & 0 & 0 & 0 & 1 & 0 \\
0 & 0 & 0 & 0 & 0 & 1 \\
0 & 0 & 0 & 1 & 0 & 0 \\ 
0 & 0 & 1 & 0 & 0 & 0 \\
1 & 0 & 0 & 0 & 0 & 0 \\ 
0 & 1 & 0 & 0 & 0 & 0 
\end{pmatrix}, \label{eq:GPT_H}
\end{eqnarray}

\begin{eqnarray}
\begin{pmatrix} 
1 & 0 \\
 0 & e^{i\phi} 
\end{pmatrix} & \leftrightarrow & T_{\phi} :=
\end{eqnarray}

\begin{equation}
\tiny
\begin{pmatrix} 
\dfrac{\lambda_1 +\cos \phi}{2} & \dfrac{\lambda_1-\cos \phi}{2} & \dfrac{(1\!-\!\lambda_1) -\sin \phi}{2} & \dfrac{(1\!-\!\lambda_1)+\sin \phi}{2} \\ 
\dfrac{\lambda_2 -\cos \phi}{2} & \dfrac{\lambda_2+\cos \phi}{2} & \dfrac{(1\!-\!\lambda_2) +\sin \phi}{2} & \dfrac{(1\!-\!\lambda_2)-\sin \phi}{2} \\
\dfrac{\lambda_3 +\sin\phi}{2} & \dfrac{\lambda_3-\sin\phi}{2} & \dfrac{(1\!-\!\lambda_3) +\cos \phi}{2} & \dfrac{(1\!-\!\lambda_3) - \cos \phi}{2} \\ 
\dfrac{\lambda_4 -\sin\phi}{2} & \dfrac{\lambda_4+\sin\phi}{2} & \dfrac{(1\!-\!\lambda_4) -\cos \phi}{2} & \dfrac{(1\!-\!\lambda_4)+\cos \phi}{2}
\end{pmatrix} \oplus \id_2, \nonumber
\end{equation}
which is a representation of SO(2) on the space of probability vectors, where $\lambda_1$, $\lambda_2$, $\lambda_3$, $\lambda_4$ can be any numerical value, as when $T$ acts on a normalised probablity vector, these unphysical degrees of freedom always disappear. A clear (but still arbitrary) choice would be $\lambda_1=\lambda_2=1$, $\lambda_3=\lambda_4=0$, such that when $\phi=0$, the matrix has the form of the diagonal identity matrix; but it should be noted that even for other choices of $\lambda$, the matrix will have no effect on the probability vectors when $\phi=0$.

The final state $\vec{s}_f$ is therefore given by:
\begin{align}
\vec{s}_f &=T_H T_{\phi} T_H \vec{s}_0 \label{Eq:Int} \\
&= \left( \begin{array}{c}
1/2 \\ 
1/2 \\
\vspace{-0.25cm} \\
\hline
\vspace{-0.25cm} \\
\frac{1}{2}(1-\sin \phi) \\
\frac{1}{2}(1+\sin \phi) \\
\vspace{-0.25cm} \\
\hline
\vspace{-0.25cm} \\
\frac{1}{2}(1+\cos \phi) \\
\frac{1}{2}(1-\cos \phi)
\end{array} \right).
\end{align}
Performing the $Z$ measurement, we obtain the outcome $+1$ with a probability $\frac{1}{2}(1+\cos \phi)$ and the outcome $-1$
with a probability $\frac{1}{2}(1-\cos \phi)$.

\subsection{Interference in other GPTs}
By analogy with interference in quantum theory, we define interference in general based on Eq.~\eqref{Eq:Int}. We assume that for one of the measurements in the theory, $Z$, we are allowed to directly prepare states in and measure (e.g.\ position). We then also require the existence of at least one `beamsplitter'-like transformation $T_H$ (and its inverse $T_H^{-1}$, which might be equal to $T_H$) which relates the statistics of $Z$ with some of the other statistics of the state (and vice versa). The simplest case is to swap the $Z$ measurement statistics with the statistics of some other measurement. Finally, we need a set of transformations $\{ T_\Phi \}$, which is the phase group associated with $Z$.

\begin{defin}[Interference in GPTs] \label{Def:Int}
For a measurement $\mathcal{E}$ with associated phase group $G_\Phi^{\mathcal{E}}$, we can construct a compound transformation on an initial state $\vec{s}_0$:
\begin{equation}
\vec{s}_{f} = (T_H^{-1}\circ g_e  \circ T_H)\vec{s}_{0}, \notag
\end{equation}
where $g_e \in G_\Phi^{\mathcal{E}}$, and $T_H$ is defined as above (e.g.~a Hadamard gate). 

If the statistics of $\mathcal{E}$ in state $\vec{s}_f$ depend on the choice of phase group element $g_e$, then we say that the theory demonstrates {\em non-trivial interference}, and the statistics of $\mathcal{E}$ in $\vec{s}_f$ are the {\em interference pattern} associated with the choice of $g_e$.
\end{defin}

We see that the phase group is naturally related (by conjugation with $T_H$) to the set of allowed interference patterns.

\inlineheading{$3$-in $2$-out gbits}
We show that by this definition, 3-in 2-out gbits can exhibit interference. 
Let $G_\Phi^{\mathcal{Z}}$ be the phase group associated with the $Z$ measurement and $T_H$ be (identically to the quantum Hadamard):

\begin{equation}
T_H = T_H^{-1} =\begin{pmatrix}
0 & 0 & 0 & 0 & 1 & 0 \\
0 & 0 & 0 & 0 & 0 & 1 \\
0 & 0 & 0 & 1 & 0 & 0 \\
0 & 0 & 1 & 0 & 0 & 0 \\
1 & 0 & 0 & 0 & 0 & 0 \\
0 & 1 & 0 & 0 & 0 & 0
\end{pmatrix}.
\end{equation}

For initial state $\vec{s}_0$, we consider the evolution
\begin{equation}
\vec{s}_{f} = (T_H^{-1}\circ g^Z_i \circ T_H)(\vec{s}_{0}), 
\end{equation}
where $g^Z_i \in G_\Phi^{\mathcal{Z}}$- the phase group of Z (the 8 automorphisms of a square, consisting of 90\degree\,rotations around the Z-axis and reflections in the planes XZ and YZ).

We explicitly label the elements of $G_\Phi^{Z} = \{g^Z_i | i = 1\ldots8 \}$ where
\begin{eqnarray}
g^Z_1=\begin{pmatrix}
0 & 0 & 0 & 1 \\
0 & 0 & 1 & 0 \\
1 & 0 & 0 & 0 \\
0 & 1 & 0 & 0 
\end{pmatrix} \oplus \id_2 \\
g^Z_k=(g^Z_1)^k | k= 1 \ldots 4. \\
g^Z_5=\begin{pmatrix}
0 & 1 & 0 & 0 \\
1 & 0 & 0 & 0 \\
0 & 0 & 1 & 0 \\
0 & 0 & 0 & 1 
\end{pmatrix} \oplus \id_2  \\ 
g^Z_k= (g^Z_1)^{k-5}(g^Z_5)^k | k= 5 \ldots 8. \\
\end{eqnarray}
We can think of $g^Z_1$, $g^Z_2$, $g^Z_3$, $g^Z_4$ as rotations, and $g^Z_5$,~$g^Z_6$,~$g^Z_7$,~$g^Z_8$ as a flip followed by a rotation. The full set of transformations, and the final states of $H g^Z_i H \vec{s}_0$ are listed in Appendix~\ref{App:SquareGroup}.

\begin{table}[ht]
\caption{The probability distributions to obtain $Z=+1$~or~$-1$ for the different phase group elements $g^Z_i$\\ in a 3-in~2-out gbit.}
\begin{tabular}{c|c|c}
$g^Z_i$ & Prob. to obtain $+1$ & Prob. to obtain $-1$ \\ \hline
$g^Z_1$  & $p(+1|Y)$ & $p(-1|Y)$\\
$g^Z_2$  & $p(-1|Z)$ & $p(+1|Z)$\\
$g^Z_3$  & $p(-1|Y)$ & $p(+1|Y)$\\
$g^Z_4$  & $p(+1|Z)$ & $p(-1|Z)$\\
$g^Z_5$  & $p(-1|Z)$ & $p(+1|Z)$\\
$g^Z_6$  & $p(+1|Y)$ & $p(-1|Y)$\\
$g^Z_7$  & $p(+1|Z)$ & $p(-1|Z)$\\
$g^Z_8$  & $p(-1|Y)$ & $p(+1|Y)$
\end{tabular} \label{Tab:Interference}
\end{table}

Finally, by performing the $Z$ measurement on our output state, we obtain the probability distribution presented in Table.~\ref{Tab:Interference}. For some input states, the final measurement outcomes will depend on our choice of $g_e$, so this procedure has the ability to display different {\em interference patterns}. 
We note that the output statistics do not distinguish between the application of phase group members $g^Z_1$ or $g^Z_6$,  $g^Z_2$ or $g^Z_5$, $g^Z_3$ or $g^Z_8$, and $g^Z_4$ or $g^Z_7$. If we know in advance what phase group member we have chosen, such interferometry can be used to tell us about the statistics of the $Y$ or $Z$ measurements in the initial state. To determine the statistics of the $X$ measurement, we would have to pick a different $T_H$.

\inlineheading{Spekkens' toy model}
In Spekken's model, it is possible to choose a $T^{\mathrm{spek}}_H$ which has some of the same behaviour as quantum the Hadamard gate. If we want the gate to be self inverse, and map $X=\pm1$ to $Z=\pm1$ and back again, the best we can do is a permutation $1324$ swapping around the second and third ontic states. Unlike the quantum Hadamard, this transformation will not change $Y$ states. Acting on probability vectors, this gate is expressed:

\begin{equation}
T^{\mathrm{spek}}_H = (T^{\mathrm{spek}}_H)^{-1} =\begin{pmatrix}
0 & 0 & 0 & 0 & 1 & 0 \\
0 & 0 & 0 & 0 & 0 & 1 \\
0 & 0 & 1 & 0 & 0 & 0 \\
0 & 0 & 0 & 1 & 0 & 0 \\
1 & 0 & 0 & 0 & 0 & 0 \\
0 & 1 & 0 & 0 & 0 & 0
\end{pmatrix}.
\end{equation}

The phase group of permutations preserving $Z$ measurements is the set of four permutations $(12)(34)$, which is a form of $Z_2\oplus Z_2$, represented as transformations on probabilities as:

\begin{eqnarray}
g_{1234} & = & \id_6 \\
g_{2134} & = & \begin{pmatrix}
0 & 0 & 0 & 1 \\
0 & 0 & 1 & 0 \\
0 & 1 & 0 & 0 \\
1 & 0 & 0 & 0
\end{pmatrix} \oplus \id_2 \\
g_{1243} & = & \begin{pmatrix}
0 & 0 & 1 & 0 \\
0 & 0 & 0 & 1 \\
1 & 0 & 0 & 0 \\
0 & 1 & 0 & 0
\end{pmatrix} \oplus \id_2 \\
g_{2143} & = & \begin{pmatrix}
0 & 1 & 0 & 0 \\
1 & 0 & 0 & 0 \\
0 & 0 & 0 & 1 \\
0 & 0 & 1 & 0
\end{pmatrix} \oplus \id_2 
\end{eqnarray}

Thus, by considering the effect of  $(T^{\mathrm{spek}}_H)^{-1} \cdot g_e \cdot T^{\mathrm{spek}}_H$ on a generic input state, we obtain a probability distribution for outputs as listed in Table~\ref{Tab:IX_Spek}.

\begin{table}[ht]
\caption{The probability distributions to obtain $Z=+1$~or~$-1$ for the different phase group elements $g_e$\\ in Spekkens' toy model.}
\begin{tabular}{c|c|c}
$g_e$ & Prob. to obtain $+1$ & Prob. to obtain $-1$ \\ \hline
$g_{1234}$  & $p(+1|Z)$ & $p(-1|Z)$\\
$g_{2134}$  & $p(-1|Y)$ & $p(+1|Y)$\\
$g_{1243}$  & $p(+1|Y)$ & $p(-1|Y)$\\
$g_{2143}$  & $p(-1|Z)$ & $p(+1|Z)$
\end{tabular} \label{Tab:IX_Spek}
\end{table}

However, it should be noted in Spekkens' toy model (as it is for qubits), if we prepare an initial state to have a well defined outcome in one of $Z$ or one of $Y$ outcomes, then the other measurement will be uniformly random- and so from the outcomes in Table~\ref{Tab:IX_Spek}, at best we will only be able to tell three possibilities of $g_e$ apart, even after performing repeated tests.

\subsection{Interference in branching interferometers}
There is one class of interferometer we call {\em branching interferometers}, in which a particle is directed down one of many possible paths, disjoint in space.  The MZI is a branching interferometer with two such paths, but this can be generalised to higher number of `branches' of the interferometer. 
The particle travelling through the system could be directed down paths spatially a long distance away from each other. It is natural by reasons of non-signalling to forbid local operations that cause the particle to jump from one disjoint branch to another, and thus the set of allowed operations after splitting must be in the phase group of the `which branch' measurement.

In such a system, it is tempting to consider a set of operations that act on one of the branches in a local manner (such as adding a piece of glass on one branch). For example, in quantum theory on a three-branch system, one could execute an operation $U_\mathrm{upper} = \mathrm{diag}\left(e^{i\phi_1}, 1, 1\right)$ on the upper branch,  $U_\mathrm{middle} = \mathrm{diag}\left(1, e^{i\phi_1}, 1\right)$ on the middle branch,  or $U_\mathrm{lower} = \mathrm{diag}\left(1, 1, e^{i\phi_3}\right)$ on the lower one, and all three of these elements will contribute towards the total phase group. It can be shown that no operation performed on the middle branch will ever adjust the relative phase between upper and lower branches, for example, and so we say that some set of operations are {\em localised} to a sub-region of the system.

However, for theories with non-abelian phase group elements, it is dangerous to trivially consider specific members of the phase group as local operations. Consider a non-abelian phase group $\mathcal{G}_Z$, with two elements $a, b \in \mathcal{G}_Z$ such that $[a,b]\neq0$. If we say $a$ applies at some point on the upper branch and $b$ applies at some point along a disjoint lower branch, we note that because $a\cdot b \neq b \cdot a$, the order in which these operations are applied will, for some states, affect the final statistics when the branches are brought back together.

\begin{figure}[ht]
\centering
  \includegraphics[width=0.9\linewidth, clip]{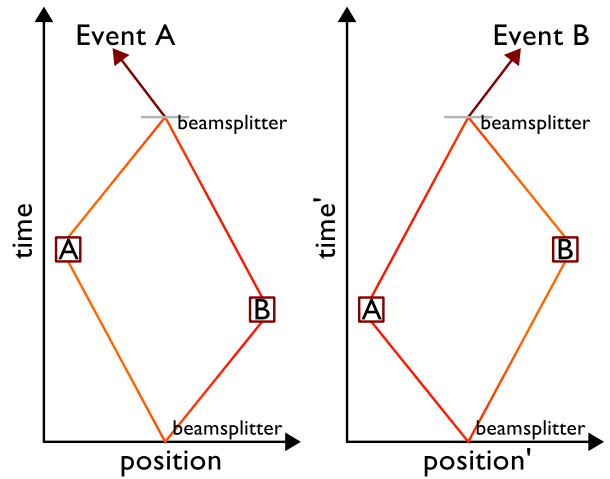}
  \caption{A two-branch interferometer shown in two frames of reference. A and B are space-like separated, and so the order in which these they are applied depends on the observer. If A and B do not commute, then different final events will be seen by the different observers.}
\label{Fig:RoS}
\end{figure}

As illustrated in Fig~\ref{Fig:RoS}, if the measured statistic is the $Z$ measurement of the particle's position after a beam-splitter, triggering a cascade in a particle detector, it is reasonable to assume these statistics should be Lorentz invariant. For two spatially disjoint operations on the branches, because of {\em relativity of simultaneity} there will be frames of reference in which the operations occur in different orders, and thus predict different output statistics. 
This violates the assumption of a single objective reality (that is, will potentially affect parameters which should be Lorentz invariant), and thus unmodified non-commuting elements of the phase group can not be considered as local operations happening on different branches. 

This does not rule out non-abelian phase group elements within a branch locally. Consider $\{a_1, \ldots a_n, b_1 \ldots b_n\} \in G_z$ where $[a_i,a_j]\neq0$ and $[b_i,b_j]\neq0$ but $[a_i,b_j] = 0$. Here, it would be possible to assign the sub-group of operations $\{a\}$ to be local to one branch, and $\{b\}$ to be local on another without running into simultaneity problems, because all the non-commuting elements are time-like separated and so have a well established ordering.

It is possible to add other physical conditions on local actions in branching interferometers. This results in more restrictions being placed on the choice of $g_e \in G^Z_\phi$, and is discussed in depth in ref.~\cite{DahlstenGV12}.

\section{Summary and concluding remarks} \label{Sec:Sum}
We defined phase operationally using the GPT framework. This allowed us to investigate phase in theories other than quantum theory. We found that phase is ubiquitous, in the sense that any non-classical theory has non-trivial phase transforms (where phase is defined with respect to a so-called maximal measurement). We determined the groups of reversible phase transforms for examples of theories other than quantum theory, finding for example that some theories have non-abelian phase groups (with respect to maximal measurements), unlike quantum theory. We discussed how phase relates to interference in GPTs.

The aim of this work was to lay the foundations for studying phase in GPTs. We now anticipate that these definitions and methods will be used to investigate connection between phase and other phenomena, such as computational speed-up and thermalisation. 

\begin{acknowledgments}
The authors thank Bob Coecke, Aleks Kissinger, Markus M\"uller, Shojun Nakayama, Felix Pollock, Matt Pusey, Takanori Sugiyama and Eyuri Wakakuwa for useful discussions. 

This work was supported by Project for Developing Innovation Systems of the Ministry of Education, Culture, Sports, Science and Technology (MEXT), Japan. 
Y.~N. acknowledges support from JSPS by KAKENHI (Grant No. 222812) 
and M.~M. acknowledges support from JSPS by KAKENHI  (Grant No. 23540463 and No. 23240001).  
O.~D., A.~G. and V.~V. acknowledge funding from the National Research Foundation (Singapore), the Ministry of Education (Singapore), the EPSRC (UK), the Templeton Foundation, the Leverhulme Trust, and the Oxford Martin School.

\end{acknowledgments}

\newpage
\bibliography{phase_refs}
\bibliographystyle{h-physrev}

\newpage $\,$
\appendix

\section{Derivation of the form of effects associated with a general pure quantum state}
\label{app:general_effect}
Consider the basis states $\{ \ket{e},\ket{e^{\perp}} \}$ where $\ket{e} = (\cos \alpha, e^{i\beta} \sin \alpha)^\mathrm{T}$ and $\ket{e^{\perp}} = (\sin \alpha, -e^{i\beta} \cos \alpha)^\mathrm{T}$. We wish to derive the effects associated with these states.

One method is to consider rotating the states $\{\ket{e}, \ket{e^\perp}\}$ into the  $\{\ket{0}, \ket{1}\}$ basis, with some transformation $T$, and then making a Z measurement using the effects associated with the $Z$  measurement.

To do this, first we construct a unitary operator $T$ in Hilbert space that transforms the general states to the computational basis such that $T\ket{e} = \ket{0}$, and $T\ket{e^\perp} = \ket{1}$:
\begin{eqnarray}
T & = & \ketbra{0}{e} + \ketbra{1}{e^\perp} \nonumber \\
& = & \left( \begin{array}{cc}
\cos\left( \frac{\alpha}{2} \right) & \sin\left( \frac{\alpha}{2} \right) \\
e^{i\beta} \sin \left( \frac{\alpha}{2} \right) & -e^{i\beta}  \cos\left( \frac{\alpha}{2} \right)
\end{array} \right)
\end{eqnarray}
There is more than one transformation that will operationally switch the states $\ketbra{0}{0}$ with $\ketbra{e}{e}$, etc. but we have arbitrarily chosen the state that does not add an additional phase term to simplify the mathematics.

For a general pure state: $\ket{\psi} = \cos\left(\frac{\zeta}{2}\right) \ket{0} + \sin\left(\frac{\zeta}{2}\right) e^{i\phi} \ket{1}$, we see $\ket{\psi} \to \ket{\psi'} = T\ket{\psi}$ is given:
\begin{eqnarray*}
\ket{\psi'}  & = & \left( \begin{array}{cc}
\cos\left( \frac{\alpha}{2} \right) & \sin\left( \frac{\alpha}{2} \right) \\
e^{i\beta} \sin \left( \frac{\alpha}{2} \right) & -e^{i\beta}  \cos\left( \frac{\alpha}{2} \right)
\end{array} \right) \left( \begin{array}{c}
\cos\left(\frac{\zeta}{2}\right) \\
e^{i\phi} \sin\left(\frac{\zeta}{2}\right)
\end{array} \right) \nonumber \\
& = & \left( \begin{array}{c}
\cos\left(\frac{\zeta}{2}\right) \cos\left( \frac{\alpha}{2} \right) + \sin\left(\frac{\zeta}{2}\right) \sin\left( \frac{\alpha}{2} \right)   e^{i\phi} \\
\cos\left(\frac{\zeta}{2}\right) \sin \left( \frac{\alpha}{2} \right) e^{i\beta} - \sin\left(\frac{\zeta}{2}\right)  \cos\left( \frac{\alpha}{2} \right) e^{i\beta}    e^{i\phi}
\end{array} \right)
\end{eqnarray*}
We want to consider the operational effect of $T$ (i.e.~how it changes a given set of measurement outcome probabilities). We use the expectation value picture to simplify the calculation and find $T_\mathrm{expt}$:

\begin{widetext}
\begin{eqnarray*}
\vec{\nu}_{\psi'} & = & T_\mathrm{expt}\left( \begin{array}{c}
\expt{X} \\
\expt{Y} \\
\expt{Z}\end{array} \right) \\
& = &  
\left(  \begin{array}{c}
- \cos\left(\alpha\right)  \cos\left(\beta\right) \cos\left(\phi\right) \sin\left(\zeta\right)
- \cos\left(\alpha\right)  \sin\left(\beta\right) \sin\left(\phi\right) \sin\left(\zeta\right)
+\sin\left( \alpha \right)  \cos\left( \zeta \right) \\

\sin\left( \beta \right) \cos\left( \phi \right)  \sin\left(\zeta\right) 
- \cos\left( \beta \right) \sin\left( \phi \right)  \sin\left(\zeta\right) \\

\sin\left(\alpha\right)  \cos\left(\beta\right) \cos\left(\phi\right) \sin\left(\zeta\right)
+ \sin\left(\alpha\right)  \sin\left(\beta\right) \sin\left(\phi\right) \sin\left(\zeta\right)
+ \cos\left( \alpha \right)  \cos\left( \zeta \right) \\

\end{array} \right) \\
& = &
\left( \begin{array}{ccc}
-\cos\left(\alpha\right) \cos\left(\beta\right) & -\cos\left(\alpha\right) \sin\left(\beta\right) & \sin\left(\alpha\right) \\
\sin\left(\beta\right) & -\cos\left(\beta\right) & 0 \\
\sin\left(\alpha\right) \cos\left(\beta\right) & \sin\left(\alpha\right) \sin\left(\beta\right) & \cos\left(\alpha\right)
\end{array} \right) 
\left( \begin{array}{c}
\sin\left(\zeta\right) \cos\left(\phi\right) \\
\sin\left(\zeta\right) \sin\left(\phi\right) \\
\cos\left(\zeta\right)
\end{array} \right).
\end{eqnarray*}
\end{widetext}

Thus we see that, as expected from quantum theory, the expectation value matrix has been acted on by an element of SO(3).

We rewrite this element in terms of the action upon the probabilities. We note that $\expt{X} = 2P(X=1)-1$, and embed the transformation into the bigger matrix $1\oplus T_\mathrm{expt}$ such that it acts on the vector $(1, \expt{X}, \expt{Y}, \expt{Z})$, where the first element $1$ is a normalisation term. Thus, we can convert the transformation on expectations to one acting on probabilities and vice versa sing the transform $C$, and its inverse $C^{-1}$:

\begin{eqnarray*}
C = \left( \begin{array}{cccccc}
A & A & B & B & C & C \\
1 & -1 & 0 & 0 & 0 & 0 \\
0 & 0 & 1 & -1 & 0 & 0 \\
0 & 0 & 0 & 0 & 1 & -1
\end{array} \right), \\
C^{-1} = \left( \begin{array}{cccc}
\frac{1}{2} & \frac{1}{2} & 0 & 0 \\
\frac{1}{2} & -\frac{1}{2} & 0 & 0 \\
\frac{1}{2} & 0 & \frac{1}{2} & 0 \\
\frac{1}{2} & 0 & -\frac{1}{2} & 0 \\
\frac{1}{2} & 0 & 0 & \frac{1}{2} \\
\frac{1}{2} & 0 & 0  & -\frac{1}{2} \\
\end{array} \right).
\end{eqnarray*}
We leave in the unphysical excess parameters $A$,$B$,$C$ where $A+B+C=1$, which arise from an extra degree of freedom in our transformation resultant from a restriction on the state vectors to form a set of probabilities.

It is hence possible to construct the general action on the probability vector $T_\mathrm{prob}$, as shown in (Equation~\ref{eq:gentrans} in Appendix~\ref{app:general_unitary}).

Thinking of an effect as a co-vector associated with a state, we write the effects associated with the $Z$ measurement as $\vec{e_{+}} = \left(0, 0 ~|~ 0, 0 ~|~ 1, 0\right)$ and $\vec{e_{-}} = \left(0, 0 ~|~ 0, 0 ~|~ 0, 1\right)$, and post-multiply them by $T_\mathrm{prob}$.

\begin{eqnarray}
\label{eq:gen_effect_1}
\vec{e} = \left( 
\begin{array}{c}
\frac{1}{2} (A+\cos\left(\beta\right) \sin\left(\alpha\right))  \\
\frac{1}{2} (A-\cos\left(\beta\right) \sin\left(\alpha\right))  \\
\frac{1}{2} (B+\sin\left(\alpha\right) \sin\left(\beta\right))  \\	
\frac{1}{2} (B-\sin\left(\alpha\right) \sin\left(\beta\right))  \\
\frac{1}{2} (C+\cos\left(\alpha\right)) \\
\frac{1}{2} (C-\cos\left(\alpha\right))
\end{array}
\right) \\
\label{eq:gen_effect_2}
\vec{e_\perp} = \left( 
\begin{array}{c}
\frac{1}{2} (A-\cos\left(\beta\right) \sin\left(\alpha\right))  \\
\frac{1}{2} (A+\cos\left(\beta\right) \sin\left(\alpha\right))  \\
\frac{1}{2} (B-\sin\left(\alpha\right) \sin\left(\beta\right))  \\	
\frac{1}{2} (B+\sin\left(\alpha\right) \sin\left(\beta\right))  \\
\frac{1}{2} (C-\cos\left(\alpha\right)) \\
\frac{1}{2} (C+\cos\left(\alpha\right))
\end{array} \right)
\end{eqnarray}

\newpage
\section{General unitary transformation on a qubit probability space}
\label{app:general_unitary}
In general, the unitary transformation acting on a qubit to map $\ket{0}\to\ket{e}=(\cos \alpha, e^{i\beta} \sin \alpha)^\mathrm{T}$ and $\ket{1}\to\ket{e_\perp} =(\sin \alpha, -e^{i\beta} \cos \alpha)^\mathrm{T}$ is given by the following matrix, (the derivation of which is outlined in Appendix\ref{app:general_effect})):
\begin{widetext}

\begin{eqnarray}
\tiny
\label{eq:gentrans}
T_\mathrm{prob} = \left(
\begin{array}{cccccc}
 \frac{1}{2} (A-\cos \left(\alpha\right) \cos \left(\beta\right)) & \frac{1}{2} (A+\cos \left(\alpha\right) \cos \left(\beta\right)) & \frac{1}{2} (B-\cos \left(\alpha\right) \sin \left(\beta\right)) & \frac{1}{2} (B+\cos \left(\alpha\right) \sin \left(\beta\right)) & \frac{1}{2} (C+\sin \left(\alpha\right)) & \frac{1}{2} (C-\sin \left(\alpha\right)) \\
 \frac{1}{2} (A+\cos \left(\alpha\right) \cos \left(\beta\right)) & \frac{1}{2} (A-\cos \left(\alpha\right) \cos \left(\beta\right)) & \frac{1}{2} (B+\cos \left(\alpha\right) \sin \left(\beta\right)) & \frac{1}{2} (B-\cos \left(\alpha\right) \sin \left(\beta\right)) & \frac{1}{2} (C-\sin \left(\alpha\right)) & \frac{1}{2} (C+\sin \left(\alpha\right)) \\
 \frac{1}{2} (A+\sin \left(\beta\right)) & \frac{1}{2} (A-\sin \left(\beta\right)) & \frac{1}{2} (B-\cos \left(\beta\right)) & \frac{1}{2} (B+\cos \left(\beta\right)) & \frac{C}{2} & \frac{C}{2} \\
 \frac{1}{2} (A-\sin \left(\beta\right)) & \frac{1}{2} (A+\sin \left(\beta\right)) & \frac{1}{2} (B+\cos \left(\beta\right)) & \frac{1}{2} (B-\cos \left(\beta\right)) & \frac{C}{2} & \frac{C}{2} \\
 \frac{1}{2} (A+\cos \left(\beta\right) \sin \left(\alpha\right)) & \frac{1}{2} (A-\cos \left(\beta\right) \sin \left(\alpha\right)) & \frac{1}{2} (B+\sin \left(\alpha\right) \sin \left(\beta\right)) & \frac{1}{2} (B-\sin \left(\alpha\right) \sin \left(\beta\right)) & \frac{1}{2} (C+\cos \left(\alpha\right)) & \frac{1}{2} (C-\cos \left(\alpha\right)) \\
 \frac{1}{2} (A-\cos \left(\beta\right) \sin \left(\alpha\right)) & \frac{1}{2} (A+\cos \left(\beta\right) \sin \left(\alpha\right)) & \frac{1}{2} (B-\sin \left(\alpha\right) \sin \left(\beta\right)) & \frac{1}{2} (B+\sin \left(\alpha\right) \sin \left(\beta\right)) & \frac{1}{2} (C-\cos \left(\alpha\right)) & \frac{1}{2} (C+\cos \left(\alpha\right))
\end{array}
\right)
\end{eqnarray}
\end{widetext}

\newpage

\section{Explicit phase group of Z in a 3-in 2-out gbit}
\label{App:SquareGroup}
The elements of the phase group $G^Z_\Phi$ associated with the Z measurement can be written explicitly:
\begin{eqnarray}
g^Z_1=\begin{pmatrix}
0 & 0 & 0 & 1 \\
0 & 0 & 1 & 0 \\
1 & 0 & 0 & 0 \\
0 & 1 & 0 & 0 
\end{pmatrix} \oplus \id_2 \\
g^Z_2=\begin{pmatrix}
0 & 1 & 0 & 0 \\
1 & 0 & 0 & 0 \\
0 & 0 & 0 & 1 \\
0 & 0 & 1 & 0 
\end{pmatrix} \oplus \id_2 \\
g^Z_3=\begin{pmatrix}
0 & 0 & 1 & 0 \\
0 & 0 & 0 & 1 \\
0 & 1 & 0 & 0 \\
1 & 0 & 0 & 0 
\end{pmatrix} \oplus \id_2 \\
g^Z_4=\begin{pmatrix}
1 & 0 & 0 & 0 \\
0 & 1 & 0 & 0 \\
0 & 0 & 1 & 0 \\
0 & 0 & 0 & 1 
\end{pmatrix} \oplus \id_2 \\
g^Z_5=\begin{pmatrix}
0 & 1 & 0 & 0 \\
1 & 0 & 0 & 0 \\
0 & 0 & 1 & 0 \\
0 & 0 & 0 & 1 
\end{pmatrix} \oplus \id_2 \\
g^Z_6=\begin{pmatrix}
0 & 0 & 0 & 1 \\
0 & 0 & 1 & 0 \\
0 & 1 & 0 & 0 \\
1 & 0 & 0 & 0 
\end{pmatrix} \oplus \id_2 \\
g^Z_7=\begin{pmatrix}
1 & 0 & 0 & 0 \\
0 & 1 & 0 & 0 \\
0 & 0 & 0 & 1 \\
0 & 0 & 1 & 0 
\end{pmatrix} \oplus \id_2 \\
g^Z_8=\begin{pmatrix}
0 & 0 & 1 & 0 \\
0 & 0 & 0 & 1 \\
1 & 0 & 0 & 0 \\
0 & 1 & 0 & 0 
\end{pmatrix} \oplus \id_2
\end{eqnarray}
These matrices should not be mistaken for the unitary operators acting on a Hilbert space- they are transformations operating on the probability vectors.

For a state initially in $\vec{s_0}$, where
\begin{equation}
\vec{s_0} = \left( \begin{array}{c}
P(+1|X) \\
P(-1|X) \\ \hline
P(+1|Y) \\
P(-1|Y) \\ \hline
P(+1|Z) \\
P(-1|Z) 
\end{array} \right),
\end{equation}

we see these transformation has the following effect on the statistics:

\begin{equation}
T_H^{-1} g^Z_1 T_H \vec{s_0}  = \left( \begin{array}{c}
P(+1|X) \\
P(-1|X) \\ \hline
P(-1|Z) \\
P(+1|Z) \\ \hline
P(+1|Y) \\
P(-1|Y) \\
\end{array} \right)
\end{equation}
\begin{equation}
T_H^{-1} g^Z_2 T_H  \vec{s_0}  =  \left( \begin{array}{c}
P(+1|X) \\
P(-1|X) \\ \hline
P(-1|Y) \\
P(+1|Y) \\ \hline
P(-1|Z) \\
P(+1|Z) 
\end{array} \right)
\end{equation}
\begin{equation}
T_H^{-1} g^Z_3 T_H  \vec{s_0}  =  \left( \begin{array}{c}
P(+1|X) \\
P(-1|X) \\ \hline
P(+1|Z) \\
P(-1|Z) \\ \hline
P(-1|Y) \\
P(+1|Y) 
\end{array} \right) 
\end{equation}
\begin{equation}
T_H^{-1} g^Z_4 T_H  \vec{s_0}  =  \left( \begin{array}{c}
P(+1|X) \\
P(-1|X) \\ \hline
P(+1|Y) \\
P(-1|Y) \\ \hline
P(+1|Z) \\
P(-1|Z) 
\end{array} \right) 
\end{equation}
\begin{equation}
T_H^{-1} g^Z_5 T_H  \vec{s_0}  =  \left( \begin{array}{c}
P(+1|X) \\
P(-1|X) \\ \hline
P(+1|Y) \\
P(-1|Y) \\ \hline
P(-1|Z) \\
P(+1|Z) 
\end{array} \right)
\end{equation}
\begin{equation}
T_H^{-1} g^Z_6 T_H  \vec{s_0}  =  \left( \begin{array}{c}
P(+1|X) \\
P(-1|X) \\ \hline
P(+1|Z) \\
P(-1|Z) \\ \hline
P(+1|Y) \\
P(-1|Y) 
\end{array} \right) 
\end{equation}
\begin{equation}
T_H^{-1} g^Z_7 T_H \vec{s_0}  =  \left( \begin{array}{c}
P(+1|X) \\
P(-1|X) \\ \hline
P(-1|Y) \\
P(+1|Y) \\ \hline
P(-1|Z) \\
P(+1|Z) 
\end{array} \right) 
\end{equation}
\begin{equation}
T_H^{-1} g^Z_8 T_H  \vec{s_0}  =  \left( \begin{array}{c}
P(+1|X) \\
P(-1|X) \\ \hline
P(-1|Z) \\
P(+1|Z) \\ \hline
P(-1|Y) \\
P(+1|Y) 
\end{array} \right) 
\end{equation}

\end{document}